\documentclass[a4paper,noarxiv,onecolumn]{quantumarticle}
\usepackage[utf8]{inputenc}
\usepackage[english]{babel}
\usepackage[T1]{fontenc}

\usepackage{amsmath,amsthm,amsfonts,amssymb,bm}
\usepackage{float}
\usepackage[numbers,sort&compress]{natbib}

\usepackage{lipsum}
\usepackage{mdframed}
\usepackage{booktabs}
\usepackage{longtable}

\newtheorem{theorem}{Theorem}

\usepackage{silence}
\WarningFilter{caption}{Unknown document class}

\begin{document}

\title{Multi-party dynamic quantum homomorphic encryption scheme based on rotation operators}
    
\author{Zhen-Zhen Li}
\email{zhenzhen\_li@bigc.edu.cn}
\affiliation{School of Information Engineering, Beijing Institute of Graphic Communication, Beijing 102600, China}
\author{Ming-Kui Liu}
\affiliation{School of Information Engineering, Beijing Institute of Graphic Communication, Beijing 102600, China}
\author{Wen-Ling Yang}
\affiliation{School of Information Engineering, Beijing Institute of Graphic Communication, Beijing 102600, China}
\author{Bo Gao}
\affiliation{School of Information Engineering, Beijing Institute of Graphic Communication, Beijing 102600, China}
\author{Zi-Chen Li}
\affiliation{School of Information Engineering, Beijing Institute of Graphic Communication, Beijing 102600, China}
   
\maketitle
	
\begin{abstract}
Quantum homomorphic encryption is the corresponding technology of classical homomorphic encryption in the quantum field. Due to its ability to ensure the correctness of computation and the security of data, it is particularly suitable for delegated computation in quantum cloud networks. However, previous schemes were unable to simultaneously handle the volatility problem of servers dynamically and eliminate the error caused by homomorphic evaluation of $T$-gate non-interactively. Therefore, a novel multi-party dynamic quantum homomorphic encryption scheme based on rotation operators is proposed in this paper. Firstly, the proposed scheme uses the rotation operators to solve the phase gate error that occurs during the homomorphic evaluation of $T$-gate non-interactively. Secondly, the scheme can dynamically deal with instability of servers, such as adding a server or removing a server. Then, the trusted key center is introduced, which is responsible for key updating and circuit replacement, which lowers the requirements for quantum capabilities on the client. Finally, this scheme extends the single-client multi-server model to the multi-client multi-server model, making it more suitable for quantum distributed networks and enhancing its practicality. In addition, we theoretically prove the correctness and fully homomorphic property of the proposed scheme, and verify it through the simulation conducted on the IBM Quantum Experience. Security analysis and efficiency analysis further demonstrate that the proposed scheme is information-theoretically secure and possesses high qubit efficiency.
\end{abstract}

\section{Introduction}\label{sec1}
With the advent of the big data era and the widespread adoption of cloud computing, data privacy and security have become focal points of societal concern. Ensuring the security of data during delegated computation is paramount. Is there a technology that can address this issue by enabling computation to be performed on encrypted data without decrypting? As early as 1978, Rivest et al. proposed the concept of homomorphic encryption, a term then referred to as privacy homomorphism \cite{1}. They envisioned an encryption system that would allow specific computation to be performed on encrypted data without needing to decrypt it. However, they did not propose a concrete implementation at that time. It wasn't until 2009 that Gentry introduced the first fully homomorphic encryption (FHE) scheme based on lattice theory \cite{2}, marking a significant breakthrough in homomorphic encryption research. This sparked interest in the international cryptographic community in fully homomorphic encryption schemes that enable arbitrary function transformations on ciphertexts.

However, most classical fully homomorphic encryption schemes rely on mathematical hard problems, mainly categorized into the approximate greatest common divisor problem on integers \cite{3,4,5} and the learning with errors problem based on lattices \cite{6,7,8}. With the development of quantum computing, traditional cryptography faces severe security challenges. For example, Shor's algorithm \cite{9} leverages the parallelism of quantum computing to factor large integers in polynomial time, thus threatening the security of classical encryption algorithms based on number theory, such as RSA and ECC. In contrast, quantum cryptography is based on quantum mechanics, with its security guaranteed by physical principles such as the uncertainty principle and the no-cloning theorem. In this context, quantum homomorphic encryption has emerged. Quantum homomorphic encryption is the quantum counterpart of classical homomorphic encryption, aiming to achieve secure and efficient homomorphic operations within the quantum computing framework. 

So far, quantum homomorphic encryption has been extensively studied. In 2012, Rohde et al. implemented a constrained quantum homomorphic scheme using the boson sampling model and quantum walks \cite{10}. In 2013, Liang firstly introduced the complete definitions of quantum homomorphic encryption (QHE) and quantum fully homomorphic encryption (QFHE), and constructed the first symmetric quantum homomorphic encryption scheme that showed clearly the framework of the scheme \cite{11}. In 2014, Yu et al. discovered a no-go result \cite{12}, stating that any perfectly information-theoretically secure QHE scheme would inevitably incur exponential overhead in a non-interactive setting. In the same year, Fisher et al. demonstrated arbitrary quantum computations on encrypted data using single photons and linear optics in experiments \cite{13}. In 2015, Broadbent and Jeffery (BJ15) constructed a QHE scheme based on classical fully homomorphic encryption, which could only perform a limited number of non-Clifford gates \cite{14}. This scheme possesses computational security and non-interactivity. In 2016, Dulek et al. extended the BJ15 scheme to develop a QHE scheme suitable for polynomial-sized non-Clifford gates \cite{15}. This scheme is not only compact but can also efficiently evaluate quantum circuits of arbitrary polynomial size. In 2017, Alagic et al. constructed a verifiable QFHE scheme \cite{16}, which provides authenticated encryption to ensure the correctness of homomorphic computation results. This scheme can perform arbitrary polynomial-time quantum computations without requiring interaction between the client and the server. In 2018, Ouyang et al. constructed an information-theoretically secure QHE scheme using quantum coding \cite{17}. This scheme is based on the structure of quantum codes and can only evaluate quantum circuits with a constant number of non-Clifford gates. In the same year, Mahadev developed the first classical-key FHE scheme for quantum circuits using a quantum capabilities approach \cite{18}. This scheme enables blind delegation of quantum computations to a trusted quantum server and malicious servers cannot gain any information from the process. Concurrently, Newman et al. \cite{19} and Lai et al. \cite{20} independently proved a strengthened no-go result: it is impossible to construct a non-interactive and information-theoretically secure quantum fully homomorphic encryption scheme. Therefore, non-interactive quantum fully homomorphic encryption schemes can only achieve computational security. In 2019, Chen et al. proposed a quantum homomorphic encryption scheme with a flexible number of servers based on the (k,n) threshold quantum secret sharing scheme \cite{21}. In 2020, Liang introduced the concepts of the encrypted gate and the gate teleportation based on the quantum teleportation. Using these concepts, Liang addressed the non-interactive evaluation of  $T$-gate, leading to the construction of two quasi-compact quantum homomorphic encryption schemes \cite{22}. In 2021, Zhang et al. proposed a probabilistic quantum homomorphic encryption scheme \cite{23}. This scheme utilizes pre-shared non-maximally entangled states between the client and server as auxiliary resources to probabilistically evaluate  $T$-gate, thereby reducing the requirement on auxiliary quantum states. In 2022, Zhang et al. extended their probabilistic quantum homomorphic encryption scheme to the multi-client scenario \cite{24}. In 2023, Chang et al. proposed a dynamic QFHE scheme which is suitable for universal quantum circuits \cite{25}. This scheme requires interaction between parties to correct errors that occur during  gate evaluations, and it addresses fluctuations in server performance. Also in 2023, Shang et al. introduced a two-round quantum homomorphic encryption scheme using matrix decomposition and circuit synthesis methods \cite{26}. They applied this scheme to the ciphertext retrieval experiment. In 2024, Pan et al. were the first to introduce quantum network coding into quantum homomorphic encryption. They proposed a cross quantum homomorphic encryption scheme \cite{27} implemented in the butterfly network. However, schemes \cite{11,15,17,18,21,25} in quantum homomorphic encryption involve interaction between the client and the server during the evaluation phase, which consumes quantum resources and may even lead to potential key leakage during interaction. Although the schemes \cite{10,13,14,16,22,23,24,26} achieve non-interactive features using various methods, they do not address the issue of server fluctuations in delegated computation. For most of the above schemes, they operate in a single-server mode, which is not suitable for quantum distributed networks and lacks practicality.

Based on the analysis of the aforementioned schemes, a novel multi-party dynamic quantum homomorphic encryption scheme based on rotation operators is proposed in this paper, which can allow arbitrary quantum computations between multiple servers and multiple clients. During the evaluation phase, no interaction is required between the parties.

Our main contributions are as follows:
\begin{itemize}
\item[$\bullet$] A novel dynamic quantum homomorphic encryption scheme based on rotation operators is proposed. The scheme can simultaneously handle the volatility problem of the servers dynamically and achieve non-interactive evaluation of $T$-gate using rotation operators. 
\item[$\bullet$] The trusted key center is introduced, which is responsible for key updating and circuit replacement, which reduces the workload of the client and even lowers the requirements for quantum capabilities on the client.
\item[$\bullet$] The proposed scheme extends the single-client multi-server model to the multi-client multi-server model, making it more suitable for quantum distributed networks and enhancing its practicality. 
\item[$\bullet$] We conduct the simulation on IBM Quantum Experience to validate the correctness of the scheme. Through theoretical analysis, we demonstrate that the scheme achieves information-theoretic security, full homomorphism and high qubit efficiency.
\end{itemize}

The remainders of this paper are organized as follows. In Sect.\ref{sec2}, we introduce the relevant preliminaries, including quantum computing, GHZ measurement, quantum gate replacement, and quantum homomorphic encryption. In Sect.\ref{sec3}, a new multi-party dynamic quantum homomorphic encryption scheme based on rotation operators is proposed. In Sect.\ref{sec4}, we provide a detailed analysis from the perspectives of correctness, security, and quantum bit efficiency, and we validate the scheme through the simulation on IBM Quantum Experience. Finally, the conclusion is presented in Sect.\ref{sec5}.

\section{Preliminaries}\label{sec2}

\subsection{Quantum computing}\label{subsec2.1}

Here, we briefly describe some common quantum gates. For details, please refer to \cite{28}. Quantum circuits used in quantum computing consist of Clifford gates and non-Clifford gates, i.e. $G = \left\{ {X,Z,H,S,T,CNOT} \right\}$. The corresponding unitary matrices are defined as follows:

\begin{equation}\label{eq1}
X = \left[ {\begin{array}{*{20}{c}}
0&1\\
1&0
\end{array}} \right]Z = \left[ {\begin{array}{*{20}{c}}
1&0\\
0&{ - 1}
\end{array}} \right]H = \frac{1}{{\sqrt 2 }}\left[ {\begin{array}{*{20}{c}}
1&1\\
1&{ - 1}
\end{array}} \right]S = \left[ {\begin{array}{*{20}{c}}
1&0\\
0&i
\end{array}} \right]T = \left[ {\begin{array}{*{20}{c}}
1&0\\
0&{{e^{{{i\pi } \mathord{\left/
 {\vphantom {{i\pi } 4}} \right.
 \kern-\nulldelimiterspace} 4}}}}
\end{array}} \right]
\end{equation}

\begin{equation}\label{eq2}
CNOT = \left[ {\begin{array}{*{20}{c}}
1&0&0&0\\
0&1&0&0\\
0&0&0&1\\
0&0&1&0
\end{array}} \right]
\end{equation}

\subsection{GHZ state and the properties}\label{subsec2.2}

The GHZ state is an entangled state of n-particle, exhibiting maximum entanglement. According to transformation formulas, its eigenstates for three-particle are given as follows:

\begin{equation}\label{eq3}
\left| {{\Psi _1}} \right\rangle  = \frac{1}{{\sqrt 2 }}\left( {\left| {000} \right\rangle  + \left| {111} \right\rangle } \right) = \frac{1}{2}\left( {\left| { +  +  + } \right\rangle  + \left| { +  -  - } \right\rangle  + \left| { -  +  - } \right\rangle  + \left| { -  -  + } \right\rangle } \right)
\end{equation}

\begin{equation}\label{eq4}
\left| {{\Psi _2}} \right\rangle  = \frac{1}{{\sqrt 2 }}\left( {\left| {000} \right\rangle  - \left| {111} \right\rangle } \right) = \frac{1}{2}\left( {\left| { +  +  - } \right\rangle  + \left| { +  -  + } \right\rangle  + \left| { -  +  + } \right\rangle  + \left| { -  -  - } \right\rangle } \right)
\end{equation}

\begin{equation}\label{eq5}
\left| {{\Psi _3}} \right\rangle  = \frac{1}{{\sqrt 2 }}\left( {\left| {100} \right\rangle  + \left| {011} \right\rangle } \right) = \frac{1}{2}\left( {\left| { +  +  + } \right\rangle  + \left| { +  -  - } \right\rangle  - \left| { -  +  - } \right\rangle  - \left| { -  -  + } \right\rangle } \right)
\end{equation}

\begin{equation}\label{eq6}
\left| {{\Psi _4}} \right\rangle  = \frac{1}{{\sqrt 2 }}\left( {\left| {100} \right\rangle  - \left| {011} \right\rangle } \right) = \frac{1}{2}\left( {\left| { +  +  - } \right\rangle  + \left| { +  -  + } \right\rangle  - \left| { -  +  + } \right\rangle  - \left| { -  -  - } \right\rangle } \right)
\end{equation}

\begin{equation}\label{eq7}
\left| {{\Psi _5}} \right\rangle  = \frac{1}{{\sqrt 2 }}\left( {\left| {010} \right\rangle  + \left| {101} \right\rangle } \right) = \frac{1}{2}\left( {\left| { +  +  + } \right\rangle  - \left| { +  -  - } \right\rangle  + \left| { -  +  - } \right\rangle  - \left| { -  -  + } \right\rangle } \right)
\end{equation}

\begin{equation}\label{eq8}
\left| {{\Psi _6}} \right\rangle  = \frac{1}{{\sqrt 2 }}\left( {\left| {010} \right\rangle  - \left| {101} \right\rangle } \right) = \frac{1}{2}\left( {\left| { +  +  - } \right\rangle  - \left| { +  -  + } \right\rangle  - \left| { -  -  + } \right\rangle  - \left| { -  -  - } \right\rangle } \right)
\end{equation}

\begin{equation}\label{eq9}
\left| {{\Psi _7}} \right\rangle  = \frac{1}{{\sqrt 2 }}\left( {\left| {110} \right\rangle  + \left| {001} \right\rangle } \right) = \frac{1}{2}\left( {\left| { +  +  + } \right\rangle  - \left| { +  -  - } \right\rangle  - \left| { -  +  - } \right\rangle  + \left| { -  -  + } \right\rangle } \right)
\end{equation}

\begin{equation}\label{eq10}
\left| {{\Psi _8}} \right\rangle  = \frac{1}{{\sqrt 2 }}\left( {\left| {110} \right\rangle  - \left| {001} \right\rangle } \right) = \frac{1}{2}\left( {\left| { +  +  - } \right\rangle  - \left| { +  -  + } \right\rangle  - \left| { -  +  + } \right\rangle  + \left| { -  -  - } \right\rangle } \right)
\end{equation}
where $\left|  +  \right\rangle $ and $\left|  -  \right\rangle $ are defined such that $\left|  +  \right\rangle  = \frac{1}{{\sqrt 2 }}\left( {\left| 0 \right\rangle  + \left| 1 \right\rangle } \right)$ and $\left|  -  \right\rangle  = \frac{1}{{\sqrt 2 }}\left( {\left| 0 \right\rangle  - \left| 1 \right\rangle } \right)$, and their combined set $\left\{ {\left|  +  \right\rangle ,\left|  -  \right\rangle } \right\}$ is referred to as the Hadamard basis, also known as the X-basis.

Suppose we prepare three Bell states from ${\beta _{00}} = {1 \mathord{\left/
 {\vphantom {1 {\sqrt 2 }}} \right.
 \kern-\nulldelimiterspace} {\sqrt 2 }}\left( {\left| {00} \right\rangle  + \left| {11} \right\rangle } \right)$
 and perform the joint GHZ measurement on the first particle of each pair of Bell state. According to the theory of entanglement exchange, at the same time the second particle of each pair of Bell state will also be entangled into the same GHZ state. The specific process is shown in Equation (\ref{eq11}).

\begin{equation}\label{eq11}
\begin{gathered}
  {\left| {{\beta _{00}}} \right\rangle _{12}}{\left| {{\beta _{00}}} \right\rangle _{34}}{\left| {{\beta _{00}}} \right\rangle _{56}} \hfill \\
   = \frac{1}{{\sqrt 8 }}\left( \begin{gathered}
  {\left| {000} \right\rangle _{135}}{\left| {000} \right\rangle _{246}} + {\left| {001} \right\rangle _{135}}{\left| {001} \right\rangle _{246}} + {\left| {010} \right\rangle _{135}}{\left| {010} \right\rangle _{246}} + {\left| {011} \right\rangle _{135}}{\left| {011} \right\rangle _{246}} \hfill \\
   + {\left| {100} \right\rangle _{135}}{\left| {100} \right\rangle _{246}} + {\left| {101} \right\rangle _{135}}{\left| {101} \right\rangle _{246}} + {\left| {110} \right\rangle _{135}}{\left| {110} \right\rangle _{246}} + {\left| {111} \right\rangle _{135}}{\left| {111} \right\rangle _{246}} \hfill \\ 
\end{gathered}  \right) \hfill \\
   = \frac{1}{{\sqrt 8 }}\left( \begin{gathered}
  {\left| {{\Psi _1}} \right\rangle _{135}}{\left| {{\Psi _1}} \right\rangle _{246}} + {\left| {{\Psi _2}} \right\rangle _{135}}{\left| {{\Psi _2}} \right\rangle _{246}} + {\left| {{\Psi _3}} \right\rangle _{135}}{\left| {{\Psi _3}} \right\rangle _{246}} + {\left| {{\Psi _4}} \right\rangle _{135}}{\left| {{\Psi _4}} \right\rangle _{246}} \hfill \\
   + {\left| {{\Psi _5}} \right\rangle _{135}}{\left| {{\Psi _5}} \right\rangle _{246}} + {\left| {{\Psi _6}} \right\rangle _{135}}{\left| {{\Psi _6}} \right\rangle _{246}} + {\left| {{\Psi _7}} \right\rangle _{135}}{\left| {{\Psi _7}} \right\rangle _{246}} + {\left| {{\Psi _8}} \right\rangle _{135}}{\left| {{\Psi _8}} \right\rangle _{246}} \hfill \\ 
\end{gathered}  \right) \hfill \\ 
\end{gathered} 
\end{equation}
where the subscript numbers represent the identification numbers of particles within the six-particle system.

After performing the single-qubit measurement in the X basis on the obtained GHZ state, we can obtain the measurement result $measure = \left\{ {M{R_1},M{R_2},M{R_3}} \right\}$. We interpret $\left|  +  \right\rangle $ as "0" and $\left|  -  \right\rangle $ as "1". Therefore, based on the chosen one of the eight eigenstates, it can be deduced that the secret to be shared is $K = M{R_1} \oplus M{R_2} \oplus M{R_3}$.

Based on the above forms of the three-particle GHZ state eigenstates and their measurement properties, the corresponding shared secrets for GHZ measurements are shown in the Table \ref{Table-1}.

\begin{table}[htbp]
\begin{center}
 \caption{The corresponding shared secret for the GHZ basis selected by the client}
  \label{Table-1}
  \begin{tabular}{@{}cc@{}}
    \hline
    The chosen GHZ basis &   The corresponding shared secret \textit{K} \\
    \hline
$\left| {{\Psi _1}} \right\rangle $  & 0\\
$\left| {{\Psi _2}} \right\rangle $  & 1\\
$\left| {{\Psi _3}} \right\rangle $  & 0\\
$\left| {{\Psi _4}} \right\rangle $  & 1\\
$\left| {{\Psi _5}} \right\rangle $  & 0\\
$\left| {{\Psi _6}} \right\rangle $  & 1\\
$\left| {{\Psi _7}} \right\rangle $  & 0\\
$\left| {{\Psi _8}} \right\rangle $  & 1\\
    \hline
  \end{tabular}
\end{center}
\end{table}

The aforementioned properties can be generalized. The n-particle GHZ state is given by

\begin{equation}\label{eq12}
\left| {{G_n}} \right\rangle  = \frac{1}{{\sqrt 2 }}\left( {\left| {{p_1},{p_2}, \cdots ,{p_n}} \right\rangle  \pm \left| {\overline {{p_1}} ,\overline {{p_2}} , \cdots ,\overline {{p_n}} } \right\rangle } \right)
\end{equation}
where $\overline {{p_i}} $ is the bitwise negation of ${p_i}\left( {{p_i} \in \left\{ {0,1} \right\},i = 1,2, \cdots n} \right)$.

For the above process, it goes through three steps: (i) Prepare \textit{n} pairs of Bell states, (ii) Conduct n-particle GHZ measurements, (iii) Conduct single-particle X-basis measurements. After that, the n-particle GHZ state has all the above properties. That is to say, the three-particle GHZ state is the case where the value of \textit{n} for the n-particle GHZ state is 3.

\subsection{Quantum gate replacement}\label{subsec2.3}

In quantum computing, quantum circuits are composed of Clifford gates and non-Clifford gates. Clifford gates are characterized by their commutativity with Pauli operators, ensuring that no errors occur during the evaluation phase. As is well known, $T$-gate is a non-Clifford gate. When applied to an encrypted state during the evaluation phase, it has a specific formula $T{X^a}{Z^b}\left| \varphi  \right\rangle  = {X^{^a}}{Z^{a \oplus b}}{S^a}T\left| \varphi  \right\rangle $. From the equation, it can be observed that an unexpected s-error appears in the homomorphic result, leading to incorrect ciphertext outputs from the server's evaluation. Therefore, efforts must be made to eliminate s-error to ensure the correctness of the homomorphism. Here, we utilize the properties of rotation operators to replace circuits before the evaluation phase, applying the modified circuits to encrypted states, and finally decrypting to obtain the correct homomorphic results. Next, we will elaborate on the main idea of this approach \cite{29}.

First, we introduce a theorem \cite{28} and its generalized form. Let $U$ be a unitary operator on a single qubit. Then there exist real numbers $\alpha $, $\beta $, $\gamma $ and $\delta $ such that

\begin{equation}\label{eq13}
U = {e^{^{i\alpha }}}{R_z}\left( \beta  \right){R_y}\left( \gamma  \right){R_z}\left( \delta  \right) \buildrel \Delta \over = U\left( {\alpha ,\beta ,\gamma ,\delta } \right)
\end{equation}
Regarding the rotation operators around the Z and Y axes, the following formulas hold true:

\begin{equation}\label{eq14}
{R_z}\left( {{{\left( { - 1} \right)}^a}\theta } \right){X^a}{Z^b} = {X^a}{Z^b}{R_z}\left( \theta  \right)
\end{equation}

\begin{equation}\label{eq15}
{R_y}\left( {{{\left( { - 1} \right)}^{a + b}}\theta } \right){X^a}{Z^b} = {X^a}{Z^b}{R_y}\left( \theta  \right)
\end{equation}
From Equation (\ref{eq13},\ref{eq14},\ref{eq15}) and the theorem \cite{28}, we have

\begin{equation}\label{eq16}
\begin{array}{l}
{X^a}{Z^b}U\left( {\alpha ,\beta ,\gamma ,\delta } \right)\\
 = {e^{^{i\alpha }}}{X^a}{Z^b}{R_z}\left( \beta  \right){R_y}\left( \gamma  \right){R_z}\left( \delta  \right)\\
 = {e^{^{i\alpha }}}{R_z}\left( {{{\left( { - 1} \right)}^a}\beta } \right){R_y}\left( {{{\left( { - 1} \right)}^{a + b}}\gamma } \right){R_z}\left( {{{\left( { - 1} \right)}^a}\delta } \right){X^a}{Z^b}\\
 = U\left( {\alpha ,{{\left( { - 1} \right)}^a}\beta ,{{\left( { - 1} \right)}^{a + b}}\gamma ,{{\left( { - 1} \right)}^a}\delta } \right){X^a}{Z^b}
\end{array}
\end{equation}

Next, we proceed with the replacement of quantum gates. According to the theorem, the $T$-gate can be decomposed as $T = U\left( {{\pi  \mathord{\left/
 {\vphantom {\pi  8}} \right.
 \kern-\nulldelimiterspace} 8},\beta ,0,\delta } \right)$
, where $\beta  + \delta  = {\pi  \mathord{\left/
 {\vphantom {\pi  4}} \right.
 \kern-\nulldelimiterspace} 4}$. Consequently, based on the encryption key $\left( {a,b} \right)$ and Equation (\ref{eq16}),  $T$-gate in the quantum circuit is replaced by $T' = U\left( {{\pi  \mathord{\left/
 {\vphantom {\pi  8}} \right.
 \kern-\nulldelimiterspace} 8},{{\left( { - 1} \right)}^a}\beta ,0,{{\left( { - 1} \right)}^a}\delta } \right)$. However, such a straightforward replacement of  $T$-gate in circuits may potentially leak the key $a$. The detailed proof can be found in \cite{29}.

Finally, an improved replacement method is needed. Note that

\begin{equation}\label{eq17}
T = \left[ {\begin{array}{*{20}{c}}
1&0\\
0&{{e^{{{i\pi } \mathord{\left/
 {\vphantom {{i\pi } 4}} \right.
 \kern-\nulldelimiterspace} 4}}}}
\end{array}} \right] = {e^{{{i\pi } \mathord{\left/
 {\vphantom {{i\pi } 8}} \right.
 \kern-\nulldelimiterspace} 8}}}\left[ {\begin{array}{*{20}{c}}
{{e^{ - {{i\pi } \mathord{\left/
 {\vphantom {{i\pi } 8}} \right.
 \kern-\nulldelimiterspace} 8}}}}&0\\
0&{{e^{{{i\pi } \mathord{\left/
 {\vphantom {{i\pi } 8}} \right.
 \kern-\nulldelimiterspace} 8}}}}
\end{array}} \right] = {e^{{{i\pi } \mathord{\left/
 {\vphantom {{i\pi } 8}} \right.
 \kern-\nulldelimiterspace} 8}}}{R_z}\left( {\frac{\pi }{4}} \right)
\end{equation}

Since the global phase ${e^{{\pi  \mathord{\left/
 {\vphantom {\pi  8}} \right.
 \kern-\nulldelimiterspace} 8}}}$
 of the $T$-gate does not affect measurement results, $T$-gate can be replaced with ${R_z}\left( {{\pi  \mathord{\left/
 {\vphantom {\pi  4}} \right.
 \kern-\nulldelimiterspace} 4}} \right)$, and ${T^\dag }$-gate can be replaced with ${R_z}\left( { - {\pi  \mathord{\left/
 {\vphantom {\pi  4}} \right.
 \kern-\nulldelimiterspace} 4}} \right)$. Thus, based on the encryption key $\left( {a,b} \right)$, the replacement rules can be improved to:

\begin{equation}\label{eq18}
\begin{array}{l}
T \to {R_z}\left( {{{\left( { - 1} \right)}^a}{\pi  \mathord{\left/
 {\vphantom {\pi  4}} \right.
 \kern-\nulldelimiterspace} 4}} \right)\\
{T^\dag } \to {R_z}\left( {{{\left( { - 1} \right)}^a}\left( {{{ - \pi } \mathord{\left/
 {\vphantom {{ - \pi } 4}} \right.
 \kern-\nulldelimiterspace} 4}} \right)} \right)
\end{array}
\end{equation}

\subsection{Quantum homomorphic encryption}\label{subsec2.4}

In 2013, Liang \cite{11} formally defined quantum homomorphic encryption and proposed the first symmetric quantum homomorphic encryption scheme that showed clearly the framework of the scheme.

Quantum homomorphic encryption schemes consist of the following four algorithms: key generation, encryption, evaluation and decryption.

(i) \textit{Key generation}: This algorithm generates two types of key. The first type is the encryption key ${e_k}$, and the other is the evaluation key ${\rho _{evk}}$. According to the key update algorithm, the decryption key ${d_k}$ can be computed using the encryption key ${e_k}$.

(ii) \textit{Encryption}: We use an algorithm $Enc$ as the encryption algorithm, which operates on the quantum states $\rho $ to produce the encrypted state $\sigma  = Enc\left( {{e_k},\rho } \right)$.

(iii) \textit{Evaluation}: We use an algorithm $Eval$ as the evaluation algorithm, which performs computations ${C_q}$ on the encrypted state $\sigma $ and outputs the evaluated ciphertext $\sigma ' = Eva{l_{{\rho _{evk}}}}\left( {{C_q},\sigma } \right)$.

(iv) \textit{Decryption}: We use an algorithm $Dec$ as the decryption algorithm. Using the decryption key ${d_k}$, it operates on the encrypted state $\sigma '$ and outputs the evaluated plaintext state $\rho ' = Dec\left( {{d_k},\sigma '} \right)$.

\section{Multi-party dynamic quantum homomorphic encryption scheme based on rotation operators}\label{sec3}

In this chapter, a novel multi-party dynamic quantum homomorphic encryption (MDQHE) scheme based on rotation operators is proposed. This scheme allows arbitrary quantum computations between multiple servers and multiple clients. During the evaluation phase, no interaction is required between the parties. In Sect.\ref{subsec3.1}, we construct the (M+N)-party dynamic quantum homomorphic encryption scheme. Then, based on the scheme in Sect.\ref{subsec3.1}, we describe the dynamic scenarios of adding a server in Sect.\ref{subsec3.2} and removing a server in Sect.\ref{subsec3.3}. 

Specifically, the client needs to complete a massive computational task but lacks the capability to do so independently, thus requiring delegation to multiple servers. If there are multiple clients, the delegated computation can be executed in parallel. These servers possess powerful computational abilities to meet user's demands, but they may be dishonest and steal the user's private data potentially. A trusted key center is introduced, which is fully trustworthy and does not disclose any secret data to anyone. It is responsible for generating secure keys and updating keys securely, as well as conducting quantum circuit replacements for the homomorphic evaluation phase, specifically referring to $T/{T^\dag }$-gate replacement. This reduces the client's operations and lowers the requirements for the client's quantum capabilities.

\subsection{Dynamic quantum homomorphic encryption scheme with (M+N)-party}\label{subsec3.1}

In this section, (M+N)-party dynamic quantum homomorphic encryption scheme is constructed. The proposed scheme allows \textit{N} clients to simultaneously delegate computations on encrypted data to \textit{M} servers, enabling tasks that clients cannot perform themselves. There is no need for interaction between the parties during the homomorphic evaluation process. Additionally, QOTP is used to ensure that secret data remains confidential and protected against leakage.

Suppose quantum universal circuit is composed of gates from a set of quantum gates $G = \left\{ {X,Z,H,S,T,CNOT} \right\}$. Assume there are \textit{N} clients, represented as $clien{t_i}$, where $i = 1,2,\, \cdots ,N$, and \textit{M} servers, represented as $serve{r_j}$, where $j = 1,2, \cdots ,M$. In the secret splitting phase, clients prepare Bell states and perform the GHZ measurement and X-basis measurement on corresponding particles to achieve to split secret. In the key generation phase, servers use measurement-device-independent quantum key distribution (MDI-QKD) protocol to distribute secure keys among multiple clients and the trusted key center. Simultaneously, the trusted key center replaces  $T$-gate in the circuit according to the keys and substitution rules, then sends the substituted quantum circuit to the servers. In the encryption phase, each client encrypts the secret data using quantum one-time pad (QOTP) technology and sends the ciphertext states to the corresponding server. In the evaluation phase, servers perform evaluations on the ciphertext states based on the sequence of quantum gates in the quantum circuit. Especially when applying the  $T$-gate, remember to use the substituted quantum circuit. After completing all evaluations, the corresponding ciphertext states are sent back to the clients. In the decryption phase, the trusted key center updates keys according to key update rules to generate decryption keys and sends them to the clients. Subsequently, clients use these decryption keys to decrypt and obtain the plaintext state that results from the homomorphic computation. These processes are illustrated in Fig.\ref{Fig-1}.

\begin{figure}[H]
\centering		 
\includegraphics[width=0.9\textwidth]{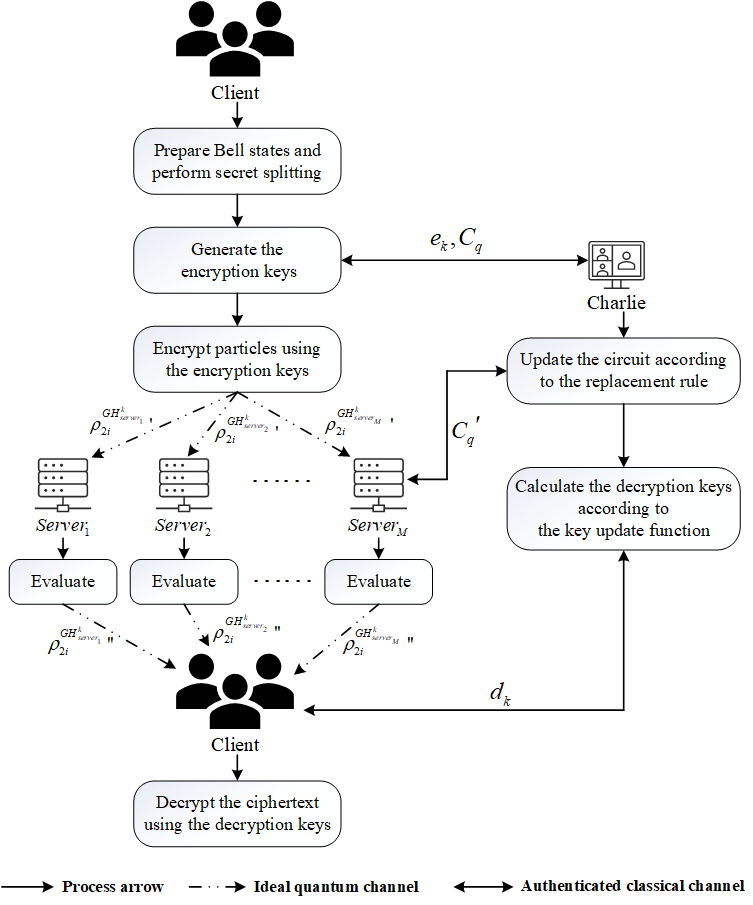}
\caption{\label{Fig-1} Flow chart of quantum homomorphic encryption scheme}
\end{figure}

\sloppy
\subsubsection{Secret Splitting}\label{subsec3.1.1}

Each client $clien{t_i}\left( {i = 1,2,\, \cdots ,N} \right)$ prepares \textit{nM} Bell states in ${\beta _{00}} = {1 \mathord{\left/
 {\vphantom {1 {\sqrt 2 }}} \right.
 \kern-\nulldelimiterspace} {\sqrt 2 }}\left( {\left| {00} \right\rangle  + \left| {11} \right\rangle } \right)$. Then, the client $clien{t_i}\left( {i = 1,2,\, \cdots ,N} \right)$ divides the particles in the Bell states into \textit{M} parts, namely: $GH_{serve{r_1}}^i = \left\{ {\rho _{1i}^{GH_{serve{r_1}}^k},\rho _{2i}^{GH_{serve{r_1}}^k}} \right\}$, $GH_{serve{r_2}}^i = \left\{ {\rho _{1i}^{GH_{serve{r_2}}^k},\rho _{2i}^{GH_{serve{r_2}}^k}} \right\}$, $ \cdots $, $GH_{serve{r_M}}^i = \left\{ {\rho _{1i}^{GH_{serve{r_M}}^k},\rho _{2i}^{GH_{serve{r_M}}^k}} \right\}$, where $k = 1,2, \cdots ,n$, $i = 1,2,\, \cdots ,N$, \textit{i} denotes the i-th client, ${\rho _{1i}}$ and ${\rho _{2i}}$ represent the first and second particles of the Bell states prepared by the i-th client, respectively. Then, the client performs the joint n-particle GHZ measurement on the first particle of each pair of Bell states. According to the properties of the GHZ state in Sect.\ref{subsec2.2}, the second particle of each pair of Bell states will simultaneously become entangled into the same GHZ state, forming two particle sets, namely: $GH_1^i = \left\{ {\rho _{1i}^{GH_{serve{r_1}}^k},\rho _{1i}^{GH_{serve{r_2}}^k}, \cdots ,\rho _{1i}^{GH_{serve{r_M}}^k}} \right\}$, $GH_2^i = \left\{ {\rho _{2i}^{GH_{serve{r_1}}^k},\rho _{2i}^{GH_{serve{r_2}}^k}, \cdots ,\rho _{2i}^{GH_{serve{r_M}}^k}} \right\}$. Subsequently, by performing the single-particle X-basis measurement on the two particle sets, the measurement results can be obtained, namely: $measur{e_{GH_1^i}} = \left\{ {\rho _{1i}^{MR_{serve{r_1}}^k},\rho _{1i}^{MR_{serve{r_2}}^k}, \cdots ,\rho _{1i}^{MR_{serve{r_M}}^k}} \right\}$, $measur{e_{GH_2^i}} = \left\{ {\rho _{2i}^{MR_{serve{r_1}}^k},\rho _{2i}^{MR_{serve{r_2}}^k}, \cdots ,\rho _{2i}^{MR_{serve{r_M}}^k}} \right\}$. Based on the basis chosen during the GHZ measurement process, the secret that the i-th client $clien{t_i}\left( {i = 1,2,\, \cdots ,N} \right)$ intends to share can be easily deduced as:

\begin{equation}\label{eq19}
\begin{array}{l}
{K_i} = \rho _{1i}^{M{R_{serve{r_1}}}} \oplus \rho _{1i}^{M{R_{serve{r_2}}}} \oplus  \cdots  \oplus \rho _{1i}^{M{R_{serve{r_M}}}}\\
{\rm{    }} = \rho _{2i}^{M{R_{serve{r_1}}}} \oplus \rho _{2i}^{M{R_{serve{r_2}}}} \oplus  \cdots  \oplus \rho _{2i}^{M{R_{serve{r_M}}}}
\end{array}
\end{equation}

For each \textit{k}, corresponding to each bit of secret data, we have the following formula:

\begin{equation}\label{eq20}
\begin{array}{l}
{K_i}\left( k \right) = \rho _{1i}^{MR_{serve{r_1}}^k} \oplus \rho _{1i}^{MR_{serve{r_2}}^k} \oplus  \cdots  \oplus \rho _{1i}^{MR_{serve{r_M}}^k}\\
{\rm{         }} = \rho _{2i}^{MR_{serve{r_1}}^k} \oplus \rho _{2i}^{MR_{serve{r_2}}^k} \oplus  \cdots  \oplus \rho _{2i}^{MR_{serve{r_M}}^k}
\end{array}
\end{equation}

For each \textit{i}, corresponding to each client, the process remains the same.

To clearly understand the complex process of secret splitting, we will give an example. Here, assume that the number of clients is one, i.e. $N = 1$, the number of servers is three, i.e. $M = 3$, and the secret data is five bits, i.e $n = 5$. The client $clien{t_1}$ needs to prepare fifteen pairs of Bell states in ${\beta _{00}} = {1 \mathord{\left/
 {\vphantom {1 {\sqrt 2 }}} \right.
 \kern-\nulldelimiterspace} {\sqrt 2 }}\left( {\left| {00} \right\rangle  + \left| {11} \right\rangle } \right)$, then divides them into three groups, with five pairs of Bell states in each group. The joint five-particle GHZ measurement is performed on the first particle of the Bell states within each group. According to the properties of the GHZ state in Sect.\ref{subsec2.2}, the second particles of the Bell states within each group will also entangle into the same GHZ state as the first particles of the Bell states. Then, the X-basis measurement is performed on the second particle of the Bell states within each group respectively. The specific splitting process is shown in the Fig.\ref{Fig-2}.

\begin{figure}[htbp]
\centering		 
\includegraphics[width=0.95\textwidth]{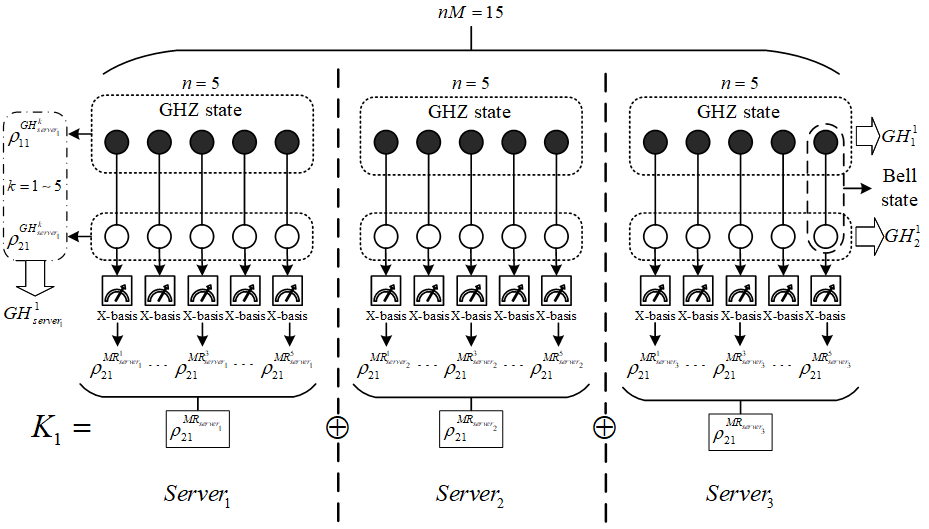}
\caption{\label{Fig-2} The example of secret splitting with one client, three servers and five-bit data}
\end{figure}

\subsubsection{Key Generation}\label{subsec3.1.2}

During this stage, two tasks need to be completed: servers participate in distributing keys among multiple clients and the trusted key center (i.e. Charlie), and the trusted key center performs substitution of the quantum circuit. The following two steps will be executed sequentially:

(i) \textit {Key Distribution}: Initially, the client $clien{t_i}\left( {i = 1,2,\, \cdots ,N} \right)$ and the trusted key center agree on a set of quantum states $\left\{ {\left| 1 \right\rangle ,\left| 0 \right\rangle ,\left|  +  \right\rangle ,\left|  -  \right\rangle } \right\}$ for key generation. Both parties randomly select states from the pre-agreed set. The l-th quantum state prepared by the i-th client is denoted as $\left| {{\varphi _{clien{t_{il}}}}} \right\rangle $, and the quantum states prepared by the trusted key center are denoted as $\left| {{\varphi _{Charli{e_{il}}}}} \right\rangle $. Next, the client and the trusted key center need to send the prepared quantum states to the servers. Upon receiving these states, the servers perform the joint Bell measurement on them, resulting in l pairs of entangled quantum states $\left| {{\varphi _{clien{t_{il}}}}} \right\rangle \left| {{\varphi _{Charli{e_{il}}}}} \right\rangle $. The servers then send the results to both parties through an authenticated classical channel. Based on the measurement results $\left\{ {{{\left| {{\Phi ^ \pm }} \right\rangle }_{clien{t_{il}}}}_{Charli{e_{il}}}} \right\}$, both parties need to retain the quantum states that correspond to the correctly measured results, and they need to inform each other about the bases used during the preparation of the quantum states. According to the BB84 protocol \cite{30}, only the quantum states where both parties used the same basis during state preparation are selected, termed as the sifted keys, which may constitute secure keys. The error rate of the selected measurement basis sequence is used to detect potential attacks. If the error rate is below the set threshold, error correction and privacy amplification are applied sequentially to the retained results, enhancing their security. Otherwise, if the error rate exceeds the set threshold, it indicates the presence of an eavesdropper, rendering the quantum channel insecure. In such cases, the protocol needs to be terminated and restarted. Finally, both parties encode the quantum states into binary bits according to pre-agreed rules:

\begin{equation}\label{eq21}
\begin{array}{l}
\left| 0 \right\rangle ,\left|  +  \right\rangle  \to 0\\
\left| 1 \right\rangle ,\left|  -  \right\rangle  \to 1
\end{array}
\end{equation}

Upon completing the above steps, both parties obtain the same secure key, represented as ${e_{{k_i}}} = \left( {{a_i},{b_i}} \right)$, where ${a_i},{b_i} \in {\left\{ {0,1} \right\}^{2n}}$. Since the encryption keys will be used to encrypt data after secret splitting, each client needs to generate \textit{M} pairs of keys ${e_{{k_i}}} = \left( {{a_i},{b_i}} \right)$ during the key generation process.

(ii) \textit {Circuit Substitution}: Firstly, the clients need to send the pre-prepared homomorphic evaluation circuit ${C_q}$ to the trusted key center, where the circuit ${C_q}$ is composed of Clifford gates and $T$-gate. Suppose the circuit ${C_q}$ consists of \textit{p} quantum gates, sequentially denoted as $Gate\left[ 1 \right]$, $Gate\left[ 2 \right]$, $ \cdots $, $Gate\left[ p \right]$, among which there are \textit{q} $T/{T^\dag }$-gate. For each $g \in \left\{ {1,2, \cdots ,p} \right\}$, two scenarios can occur:

If $Gate\left[ g \right] \notin \left\{ {T,{T^\dag }} \right\}$, then there is no need to replace the circuit.

If $Gate\left[ g \right] \in \left\{ {T,{T^\dag }} \right\}$, then it is required to update the $T/{T^\dag }$-gate according to the encryption key ${e_{{k_i}}} = \left( {{a_i},{b_i}} \right)$ and the substitution rules outlined in Sect.\ref{subsec2.3}. The replacement is as follows:

\begin{equation}\label{eq22}
\begin{gathered}
  T \to {R_z}\left( {{{\left( { - 1} \right)}^{{a_i}\left( \omega  \right)}}{\pi  \mathord{\left/
 {\vphantom {\pi  4}} \right.
 \kern-\nulldelimiterspace} 4}} \right) \hfill \\
  {T^\dag } \to {R_z}\left( {{{\left( { - 1} \right)}^{{a_i}\left( \omega  \right)}}\left( {{{ - \pi } \mathord{\left/
 {\vphantom {{ - \pi } 4}} \right.
 \kern-\nulldelimiterspace} 4}} \right)} \right) \hfill \\ 
\end{gathered}
\end{equation}
Where ${a_i}\left( \omega  \right)$ represents the partial encryption key corresponding to the $\omega \left( {\omega  = 1,2, \cdots ,n} \right)$-th qubit of the encryption key ${e_{{k_i}}} = \left( {{a_i},{b_i}} \right)$.

For the circuit ${C_q}$ that includes \textit{q} $T/{T^\dag }$-gate, the entire process requires \textit{q} rounds of quantum gate substitution.

After completing the above operations, the trusted key center sends the substituted circuit $C_q^{'}$ to the respective servers.

\sloppy
\subsubsection{Encryption}\label{subsec3.1.3}

Each client $clien{t_i}\left( {i = 1,2,\, \cdots ,N} \right)$ uses their respective encryption keys ${e_{{k_i}}} = \left( {{a_i},{b_i}} \right)$ and the QOTP technology to encrypt the secret-splitted sets of particles $GH_2^i$, i.e. $\rho _{2i}^{GH_{serve{r_1}}^{k}},\rho _{2i}^{GH_{serve{r_2}}^{k}}, \cdots ,\rho _{2i}^{GH_{serve{r_M}}^{k}}$. After each client has encrypted all their plaintext states, they obtain the ciphertext states $\rho _{2i}^{GH_{serve{r_1}}^{k}}{'},\rho _{2i}^{GH_{serve{r_2}}^{k}}{'}, \cdots ,\rho _{2i}^{GH_{serve{r_M}}^{k}}{'}$. The clients then send the corresponding ciphertext states to the servers sequentially. For example, the ciphertext state $\rho _{2i}^{GH_{serve{r_1}}^{k}}{'}$ is sent to the first server $serve{r_1}$, the ciphertext state $\rho _{2i}^{GH_{serve{r_2}}^{k}}{'}$ is sent to the second server $serve{r_2}$, and so on.

\subsubsection{Evaluation}\label{subsec3.1.4}

In this stage, the servers receive the evaluation circuit $C_q^{'}$ sent by the trusted key center, as well as the ciphertext states sent by the clients. For instance, the first server receives the ciphertext state $\rho _{21}^{GH_{serve{r_1}}^k}{'} \otimes \rho _{22}^{GH_{serve{r_1}}^k}{'} \otimes  \cdots  \otimes \rho _{2i}^{GH_{serve{r_1}}^k}{'} \otimes \rho _{2N}^{GH_{serve{r_1}}^k}{'}$. Subsequently, the servers need to evaluate the received ciphertext states by applying the substituted circuit $C_q^{'}$ to them. After evaluating all the circuits, the servers send the evaluated ciphertext states $\rho _{2i}^{GH_{serve{r_1}}^k}{''},\rho _{2i}^{GH_{serve{r_2}}^k}{''}, \cdots ,\rho _{2i}^{GH_{serve{r_M}}^k}{''}$ back to the corresponding clients.

\subsubsection{Decryption}\label{subsec3.1.5}

In the final stage, the crucial tasks involve the trusted key center completing key updating and the clients completing decryption. The following two steps will be executed sequentially:

(i) \textit {Key Update}: According to the key update functions and the pre-substituted quantum circuit, the trusted key center needs to compute the decryption key ${d_{{k_i}}} = \left( {{a_i}',{b_i}'} \right)$, where ${a_i}',{b_i}' \in {\left\{ {0,1} \right\}^{2n}}$, and then send it to the corresponding clients. Since the decryption key will be used to decrypt data after secret splitting, $Charlie$ needs to generate \textit{MN} decryption keys during the key update process. The key update function are:

(a) If $Gate\left[ g \right] = X$ or $Gate\left[ g \right] = Z$, ${e_{{k_i}}}\left( \omega  \right) = \left( {{a_i}\left( \omega  \right),{b_i}\left( \omega  \right)} \right) \to {d_{{k_i}}}\left( \omega  \right) = \left( {{a_i}\left( \omega  \right)',{b_i}\left( \omega  \right)'} \right) = \left( {{a_i}\left( \omega  \right),{b_i}\left( \omega  \right)} \right)$;

(b) If $Gate\left[ g \right] = H$, ${e_{{k_i}}}\left( \omega  \right) = \left( {{a_i}\left( \omega  \right),{b_i}\left( \omega  \right)} \right) \to {d_{{k_i}}}\left( \omega  \right) = \left( {{a_i}\left( \omega  \right)',{b_i}\left( \omega  \right)'} \right) = \left( {{b_i}\left( \omega  \right),{a_i}\left( \omega  \right)} \right)$;

(c) If $Gate\left[ g \right] = S$, ${e_{{k_i}}}\left( \omega  \right) = \left( {{a_i}\left( \omega  \right),{b_i}\left( \omega  \right)} \right) \to {d_{{k_i}}}\left( \omega  \right) = \left( {{a_i}\left( \omega  \right)',{b_i}\left( \omega  \right)'} \right) = \left( {{a_i}\left( \omega  \right),{a_i}\left( \omega  \right) \oplus {b_i}\left( \omega  \right)} \right)$;

(d) If $Gate\left[ g \right] = CNOT$, ${e_{{k_i}}}\left( {\omega ,\omega '} \right) = \left( {{a_i}\left( \omega  \right),{b_i}\left( \omega  \right),{a_i}\left( {\omega '} \right),{b_i}\left( {\omega '} \right)} \right) \to {d_{{k_i}}}\left( {\omega ,\omega '} \right) = \left( {{a_i}\left( \omega  \right)',{b_i}\left( \omega  \right)',{a_i}\left( {\omega '} \right)',{b_i}\left( {\omega '} \right)'} \right) = \left( {{a_i}\left( \omega  \right),{b_i}\left( \omega  \right) \oplus {b_i}\left( {\omega '} \right),{a_i}\left( \omega  \right) \oplus {a_i}\left( {\omega '} \right),{b_i}\left( {\omega '} \right)} \right)$;

(e) If $Gate\left[ g \right] = T/{T^\dag }$, ${e_{{k_i}}}\left( \omega  \right) = \left( {{a_i}\left( \omega  \right),{b_i}\left( \omega  \right)} \right) \to {d_{{k_i}}}\left( \omega  \right) = \left( {{a_i}\left( \omega  \right)',{b_i}\left( \omega  \right)'} \right) = \left( {{a_i}\left( \omega  \right),{b_i}\left( \omega  \right)} \right)$. \\
where ${e_{{k_i}}}\left( \omega  \right) = \left( {{a_i}\left( \omega  \right),{b_i}\left( \omega  \right)} \right)$ represents the encryption key corresponding to the $\omega \left( {\omega  = 1,2, \cdots ,n} \right)$-th qubit of the encryption key ${e_{{k_i}}} = \left( {{a_i},{b_i}} \right)$. The explanations for $\omega '$ and the decryption key ${d_{{k_i}}}\left( \omega  \right) = \left( {{a_i}{{\left( \omega  \right)}^\prime },{b_i}{{\left( \omega  \right)}^\prime }} \right)$ are the same as those for $\omega $ and the encryption key ${e_{{k_i}}}\left( \omega  \right) = \left( {{a_i}\left( \omega  \right),{b_i}\left( \omega  \right)} \right)$.

It is worth noting that in specific cases, for the key update of the $T/{T^\dag }$-gate, according to the circuit substitution rules in Sect.\ref{subsec2.3}, it is known that the key remains unchanged during the update process. This means that the decryption key is the same as the encryption key.

(ii) \textit {Decryption}: When the client $clien{t_i}\left( {i = 1,2,\, \cdots ,N} \right)$ receives the decryption key and the evaluated ciphertext $\rho _{2i}^{GH_{serve{r_1}}^k}{''},\rho _{2i}^{GH_{serve{r_2}}^k}{''}, \cdots ,\rho _{2i}^{GH_{serve{r_M}}^k}{''}$ from the servers, the client decrypts the ciphertext using the quantum one-time pad (QOTP) technique based on the decryption key ${d_{{k_i}}} = \left( {{a_i}',{b_i}'} \right)$, thereby obtaining the plaintext state after homomorphic evaluation.

\subsection{Adding a server}\label{subsec3.2}

Assume a new server $serve{r_{M + 1}}$ wishes to join the dynamic quantum homomorphic encryption scheme with (M+N)-party proposed in Sect.\ref{subsec3.1}. The scheme proposed in this section can handle the instability issue of the server. Before entering this scheme, clients and servers need to complete a series of operations to join the scheme, forming a dynamic quantum homomorphic encryption scheme with (M+N+1)-party. The scheme with (M+N)-party and the scheme with (M+N+1)-party are largely similar. Here, we primarily analyze the differences. The specific operations to be performed are as follows:

During the secret splitting phase, each client $clien{t_i}\left( {i = 1,2,\, \cdots ,N} \right)$ prepares \textit{n} Bell states from ${\beta _{00}} = {1 \mathord{\left/
 {\vphantom {1 {\sqrt 2 }}} \right.
 \kern-\nulldelimiterspace} {\sqrt 2 }}\left( {\left| {00} \right\rangle  + \left| {11} \right\rangle } \right)$. Then, each client $clien{t_i}\left( {i = 1,2,\, \cdots ,N} \right)$ performs the joint n-particle GHZ measurement on the first particle of each pair of Bell states. According to the theory of entanglement swapping and the properties of GHZ measurements described in Sect.\ref{subsec2.2}, the second particle of each pair of Bell states also becomes entangled into the same GHZ state simultaneously. This process forms two sets of particles, denoted as $GH_1^i = \left\{ {\rho _{1i}^{GH_{serve{r_{M + 1}}}^k}} \right\}$ and $GH_2^i = \left\{ {\rho _{2i}^{GH_{serve{r_{M + 1}}}^k}} \right\}$, where $k = 1,2, \cdots ,n$, $i = 1,2,\, \cdots ,N$, \textit{i} represents the i-th client, ${\rho _{1i}}$ and ${\rho _{2i}}$ represent the first and second particles of the Bell states prepared by the i-th client, respectively. Subsequently, the single-qubit measurement in the X-basis is performed separately on the two sets of particles, yielding measurement outcomes $measur{e_{GH_1^i}} = \left\{ {\rho _{1i}^{MR_{serve{r_{M + 1}}}^k}} \right\}$ and $measur{e_{GH_2^i}} = \left\{ {\rho _{2i}^{MR_{serve{r_{M + 1}}}^k}} \right\}$. Based on the basis chosen during the GHZ measurement process, it is straightforward to determine that the secret the i-th client $clien{t_i}\left( {i = 1,2,\, \cdots ,N} \right)$ wishes to share is

\begin{equation}\label{eq23}
\begin{array}{l}
{K_i}' = \rho _{1i}^{M{R_{serve{r_1}}}} \oplus \rho _{1i}^{M{R_{serve{r_2}}}} \oplus  \cdots  \oplus \rho _{1i}^{M{R_{serve{r_M}}}} \oplus \rho _{1i}^{M{R_{serve{r_{M + 1}}}}}\\
{\rm{    }} = \rho _{2i}^{M{R_{serve{r_1}}}} \oplus \rho _{2i}^{M{R_{serve{r_2}}}} \oplus  \cdots  \oplus \rho _{2i}^{M{R_{serve{r_M}}}} \oplus \rho _{2i}^{M{R_{serve{r_{M + 1}}}}}
\end{array}
\end{equation}

For each \textit{k}, corresponding to each bit of secret data, we have the following formula:

\begin{equation}\label{eq24}
\begin{array}{l}
{K_i}'\left( k \right) = \rho _{1i}^{MR_{serve{r_1}}^k} \oplus \rho _{1i}^{MR_{serve{r_2}}^k} \oplus  \cdots  \oplus \rho _{1i}^{MR_{serve{r_M}}^k} \oplus \rho _{1i}^{MR_{serve{r_{M + 1}}}^k}\\
{\rm{         }} = \rho _{2i}^{MR_{serve{r_1}}^k} \oplus \rho _{2i}^{MR_{serve{r_2}}^k} \oplus  \cdots  \oplus \rho _{2i}^{MR_{serve{r_M}}^k} \oplus \rho _{2i}^{MR_{serve{r_{M + 1}}}^k}
\end{array}
\end{equation}

For each \textit{i}, corresponding to each client, the process remains the same.

In the key generation phase, two tasks still need to be completed: the server participate in distributing keys among multiple clients and the trusted key center, and the trusted key center completes the substitution of quantum circuits. The following two steps will be executed sequentially: During the key distribution process, a difference from before is that due to the additional share of secret data after secret splitting, each client $clien{t_i}\left( {i = 1,2,\, \cdots ,N} \right)$ and the trusted key center need to generate one extra encryption key ${e_{{k_i}}} = {\left( {{a_i},{b_i}} \right)_{serve{r_{M + 1}}}}$ compared to before, where ${a_i},{b_i} \in {\left\{ {0,1} \right\}^{2n}}$. In the quantum circuit substitution process, a difference from before is that the trusted key center also needs to send the substituted circuit $C_q^{'}$ to the new server $serve{r_{M + 1}}$. 

During the encryption process, each client $clien{t_i}\left( {i = 1,2,\, \cdots ,N} \right)$ uses their respective encryption key ${e_{{k_i}}} = {\left( {{a_i},{b_i}} \right)_{serve{r_{M + 1}}}}$ and the QOTP (quantum one-time pad) technique to individually encrypt the set of particles resulting from secret splitting. This operation effectively encrypts $GH_2^i = \left\{ {\rho _{2i}^{GH_{serve{r_{M + 1}}}^k}} \right\}$ to obtain the ciphertext set $\rho _{2i}^{GH_{serve{r_{M + 1}}}^k}{'}$. Subsequently, each client sends their corresponding ciphertext set to the server $serve{r_{M + 1}}$.

Apart from the above operations, all the subsequent steps are the same as those performed by all other servers for their corresponding ciphertext states and $serve{r_{M + 1}}$ must be strictly followed. For clients, they are required to cooperate with other servers as they did previously.

\subsection{Removing a server}\label{subsec3.3}

Now, suppose the server $serve{r_{M}}$ in the dynamic quantum homomorphic encryption scheme with (M+N)-party wishes to withdraw from the scheme. After the server exits the scheme, the dynamic quantum homomorphic encryption scheme with (M+N-1)-party is formed. The scheme with (M+N)-party and the scheme with (M+N-1)-party are largely similar. Here, we will mainly analyze the differences.

In the secret splitting phase, each client $clien{t_i}\left( {i = 1,2,\, \cdots ,N} \right)$ performs the joint n-particle GHZ measurement on the first particle of each pair of Bell states, forming two sets of particles, designated as $GH_1^i = \left\{ {\rho _{1i}^{GH_{serve{r_1}}^k},\rho _{1i}^{GH_{serve{r_2}}^k}, \cdots ,\rho _{1i}^{GH_{serve{r_M}}^k}} \right\}$ and $GH_2^i = \left\{ {\rho _{2i}^{GH_{serve{r_1}}^k},\rho _{2i}^{GH_{serve{r_2}}^k}, \cdots ,\rho _{2i}^{GH_{serve{r_M}}^k}} \right\}$. Subsequently, the single-qubit measurement in the X-basis is performed separately on the two sets of particles, yielding measurement outcomes $measur{e_{GH_1^i}} = \left\{ {\rho _{1i}^{MR_{serve{r_1}}^k},\rho _{1i}^{MR_{serve{r_2}}^k}, \cdots ,\rho _{1i}^{MR_{serve{r_M}}^k}} \right\}$ and $measur{e_{GH_2^i}} = \left\{ {\rho _{2i}^{MR_{serve{r_1}}^k},\rho _{2i}^{MR_{serve{r_2}}^k}, \cdots ,\rho _{2i}^{MR_{serve{r_M}}^k}} \right\}$. At this point, a server wants to exit the scheme, but since the client $clien{t_i}\left( {i = 1,2,\, \cdots ,N} \right)$ has already obtained a set of measurements, it is easy to determine that the secret the i-th client wishes to share is

\begin{equation}\label{eq25}
\begin{array}{l}
{K_i}'' = {K_i} \oplus \rho _{1i}^{M{R_{serve{r_M}}}}\\
{\rm{    }} = {K_i} \oplus \rho _{2i}^{M{R_{serve{r_M}}}}
\end{array}
\end{equation}

For each \textit{k}, corresponding to each bit of secret data, we have the following formula:

\begin{equation}\label{eq26}
\begin{array}{l}
{K_i}''\left( k \right) = {K_i}\left( k \right) \oplus \rho _{1i}^{MR_{serve{r_M}}^k}\\
{\rm{         }} = {K_i}\left( k \right) \oplus \rho _{2i}^{MR_{serve{r_M}}^k}
\end{array}
\end{equation}

For each \textit{i}, corresponding to each client, the process remains the same.

After this, $serve{r_M}$ will be excluded, and no further actions are necessary. The server doesn't even need to be online.

\section{Analysis}\label{sec4}

In this chapter, the correctness, security and efficiency of the proposed scheme will be analyzed. Among them, the correctness is the most critical part, the security can ensure that secret information will not be leaked, and the whole scheme is higher in terms of qubit efficiency than other schemes.

\subsection{Correctness analysis}\label{subsec4.1}

\begin{theorem}\label{thm1}
If the trusted key center, clients and servers can be honestly executed in the secret splitting, key generation, encryption, evaluation, and decryption stages, the clients can get the correct result.
\end{theorem}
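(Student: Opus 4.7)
\bigskip
\noindent\textbf{Proof plan for Theorem \ref{thm1}.}

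The plan is to verify correctness phase by phase, showing that the state held by the servers after the evaluation is exactly $C_q$ applied to the plaintext, but dressed by the QOTP masks $X^{a_i'}Z^{b_i'}$ specified by the updated key $d_{k_i}$. First I would fix a single client $client_i$ and a single bit position $k$ of the secret $K_i$; by the symmetry of the construction, correctness for all $(i,k)$ follows from the single-instance argument. I would then track the quantum state through the five stages in order: (i) confirm that the particle $\rho_{2i}^{GH_{server_j}^k}$ held by $server_j$ together with the measurement outcomes correctly encodes the shared secret $K_i(k)$; (ii) verify QOTP encryption with key $(a_i,b_i)$; (iii) check the homomorphic evaluation of every gate in $G$; (iv) verify key updating; and finally (v) verify that decryption with $d_{k_i}$ cleans all masks.

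For stage (i), I would invoke the GHZ entanglement-swapping identity in Equation~(\ref{eq11}) together with Table~\ref{Table-1}: after the joint $n$-party GHZ measurement on the first halves of the Bell pairs, the second halves collapse into the same GHZ eigenstate, and the subsequent $X$-basis measurements on the first set of particles (as well as the servers' $X$-basis measurements on their halves of the second set) yield outcomes whose parity equals $K_i(k)$, matching Equations~(\ref{eq19}) and (\ref{eq20}). This is essentially a bookkeeping step relying on the identities proved in Sect.~\ref{subsec2.2}.

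For stage (iii), which is the technical core, I would do a case analysis on $Gate[g]\in G$. For the Clifford generators $X,Z,H,S,CNOT$, I would invoke the standard Pauli--Clifford commutation relations ($HX^aZ^b=X^bZ^aH$, $SX^aZ^b=X^aZ^{a\oplus b}S$, and the analogous two-qubit identity for $CNOT$) and check that each matches the key-update rule (a)--(d) in Sect.~\ref{subsec3.1.5}. The genuinely new case is (e), the $T/T^\dagger$ gate: here I would use the circuit substitution $T\to R_z((-1)^{a_i(\omega)}\pi/4)$ from Equation~(\ref{eq22}) together with Equation~(\ref{eq14}) to compute
\begin{equation*}
R_z\bigl((-1)^{a_i(\omega)}\pi/4\bigr)\,X^{a_i(\omega)}Z^{b_i(\omega)}\lvert\varphi\rangle
= X^{a_i(\omega)}Z^{b_i(\omega)}\,R_z(\pi/4)\,\lvert\varphi\rangle,
\end{equation*}
and then use Equation~(\ref{eq17}) to equate $R_z(\pi/4)$ with $T$ up to the global phase $e^{i\pi/8}$, which is physically irrelevant. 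This is exactly what makes the $S$-error in $TX^aZ^b=X^aZ^{a\oplus b}S^aT$ disappear without interaction, and it justifies why rule (e) leaves the key unchanged. The analogous computation handles $T^\dagger$.

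Stages (iv) and (v) then combine trivially: composing the per-gate commutation rules along the sequence $Gate[1],\dots,Gate[p]$ gives
\begin{equation*}
C_q'\,\Bigl(\bigotimes_\omega X^{a_i(\omega)}Z^{b_i(\omega)}\Bigr)\lvert\psi\rangle
= \Bigl(\bigotimes_\omega X^{a_i'(\omega)}Z^{b_i'(\omega)}\Bigr)\,C_q\lvert\psi\rangle,
\end{equation*}
so applying $X^{a_i'(\omega)}Z^{b_i'(\omega)}$ at the client recovers $C_q\lvert\psi\rangle$ on each share, and the XOR reconstruction of the split secret (Equations~(\ref{eq19})--(\ref{eq20})) yields the correct plaintext outcome. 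The main obstacle I anticipate is keeping the indexing honest when combining the multi-client multi-server structure with the secret-splitting indices $(i,j,k,\omega)$, since a sloppy treatment could obscure whether the updated key $d_{k_i}$ truly matches the accumulated Pauli mask on each server's share; I would handle this by first proving the single-client single-server single-qubit case rigorously and then lifting by a clean tensor-product argument.
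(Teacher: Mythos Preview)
Your proposal is correct and follows essentially the same approach as the paper: both arguments hinge on the commutation identity showing that the substituted circuit $C_q'$ applied to the QOTP-encrypted state equals $X^{a'}Z^{b'}C_q\lvert\omega\rangle$, with the $T$-gate case handled exactly as you describe via Equations~(\ref{eq14}), (\ref{eq17}), and (\ref{eq22}). The paper's proof is in fact considerably terser than your plan---it presents only the single display Equation~(\ref{eq28}) on a tensor-product circuit $C_q=\Omega_1\otimes T_2\otimes\cdots\otimes\Omega_n$ and does not spell out the per-gate Clifford cases or the secret-splitting bookkeeping---so your phase-by-phase and gate-by-gate structure is more thorough than what the paper actually writes down. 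One small misreading: in stage~(i) you mention ``the servers' $X$-basis measurements on their halves of the second set,'' but in the scheme as described in Sect.~\ref{subsec3.1.1} all $X$-basis measurements are performed by the client, not the servers; this does not affect your core argument.
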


\begin{proof}
In the proposed scheme, for the input plaintext $\omega $, the encryption key ${e_k} = \left( {a,b} \right)$, and the quantum circuit ${C_q}$, the homomorphism evaluation should be satisfied

\begin{equation}\label{eq27}
VerDec\left( {Eval_{{\rho _{evk}}}^{{C_q}}\left( {Enc\left( {Split\left( \omega  \right)} \right)} \right)} \right) = {\Phi _{{C_q}}}\left( \omega  \right)
\end{equation}
where $VerDec\left( {Eval_{{\rho _{evk}}}^{{C_q}}\left( {Enc\left( {Split\left( \omega  \right)} \right)} \right)} \right)$ represents the result of the plaintext data after the secret splitting, key generation, encryption, evaluation, and decryption are performed, and ${\Phi _{{C_q}}}\left( \omega  \right)$ represents the result of the quantum circuit ${C_q}$ directly acting on the plaintext data. If the trusted key center, clients, and servers can perform honestly in the secret splitting, key generation, encryption, evaluation, and decryption phases, the result should be equal.

As is well-known, any quantum circuit is composed of Clifford gates and $T$-gate. Suppose the quantum circuit used in the scheme is denoted as ${C_q} = {\Omega _1} \otimes {T_2} \otimes {\Omega _3} \otimes  \cdots  \otimes {\Omega _n}$, acting on \textit{n} qubits represented by $\left| \omega  \right\rangle  = \left| {{\omega _1}} \right\rangle  \otimes \left| {{\omega _2}} \right\rangle  \otimes \left| {{\omega _3}} \right\rangle  \otimes  \cdots  \otimes \left| {{\omega _n}} \right\rangle $. Specifically, ${\Omega _1}$ is applied to the first qubit $\left| {{\omega _1}} \right\rangle $, $T$-gate is applied to the second qubit $\left| {{\omega _2}} \right\rangle $, and so forth, sequentially applied to their respective qubits. Simultaneously, suppose the encryption key is ${e_k} = \left( {a,b} \right)$, where $a,b \in {\left\{ {0,1} \right\}^{2n}}$, and the decryption key is ${d_k} = \left( {a',b'} \right)$, where $a',b' \in {\left\{ {0,1} \right\}^{2n}}$. Throughout the process, the following equation holds true:

\begin{equation}\label{eq28}
\begin{gathered}
  {C_q}{X^a}{Z^b}\left| \omega  \right\rangle  \hfill \\
   = \left( {{\Omega _1} \otimes {T_2} \otimes {\Omega _3} \otimes  \cdots  \otimes {\Omega _n}} \right)\left( {{X^{{a_1}}}{Z^{{b_1}}}\left| {{\omega _1}} \right\rangle  \otimes {X^{{a_2}}}{Z^{{b_2}}}\left| {{\omega _2}} \right\rangle  \otimes {X^{{a_3}}}{Z^{{b_3}}}\left| {{\omega _3}} \right\rangle  \otimes  \cdots  \otimes {X^{{a_n}}}{Z^{{b_n}}}\left| {{\omega _n}} \right\rangle } \right) \hfill \\
   = {\Omega _1}{X^{{a_1}}}{Z^{{b_1}}}\left| {{\omega _1}} \right\rangle  \otimes {R_z}\left( {{{\left( { - 1} \right)}^{{a_2}}}{\pi  \mathord{\left/
 {\vphantom {\pi  4}} \right.
 \kern-\nulldelimiterspace} 4}} \right){X^{{a_2}}}{Z^{{b_2}}}\left| {{\omega _2}} \right\rangle  \otimes {\Omega _3}{X^{{a_3}}}{Z^{{b_3}}}\left| {{\omega _3}} \right\rangle  \otimes  \cdots  \otimes {\Omega _n}{X^{{a_n}}}{Z^{{b_n}}}\left| {{\omega _n}} \right\rangle  \hfill \\
   = {X^{{a_1}'}}{Z^{{b_1}'}}{\Omega _1}\left| {{\omega _1}} \right\rangle  \otimes {X^{{a_2}}}{Z^{{b_2}}}{T_2}\left| {{\omega _2}} \right\rangle  \otimes {X^{{a_3}'}}{Z^{{b_3}'}}{\Omega _3}\left| {{\omega _3}} \right\rangle  \otimes  \cdots  \otimes {X^{{a_n}'}}{Z^{{b_n}'}}{\Omega _n}\left| {{\omega _n}} \right\rangle  \hfill \\
   = {X^{a'}}{Z^{b'}}{C_q}\left| \omega  \right\rangle  \hfill \\ 
\end{gathered}
\end{equation}

From Equation (\ref{eq28}), we know that the decrypted result after performing computations on the ciphertext is equivalent to the result of direct computation on the plaintext, theoretically proving the correctness of the scheme. At the same time, since the quantum circuit can be arbitrary, this also demonstrates that the scheme can achieve quantum homomorphic encryption for any quantum circuit, thus possessing full homomorphism.
\end{proof}

Here, we will conduct a four-qubit simulation experiment on IBM Quantum Experience to further demonstrate the correctness of the proposed scheme. The quantum circuit used in the experiment is ${C_q} = {\left( {XH} \right)_1} \otimes {T_2} \otimes CNO{T_{23}} \otimes {\left( {ZS} \right)_4}$, acting on four qubits denoted by $\left| \omega  \right\rangle  = \left| 1 \right\rangle  \otimes \left| 0 \right\rangle  \otimes \left| 1 \right\rangle  \otimes \left| 0 \right\rangle $, where $CNO{T_{23}}$ indicates the second qubit as the control qubit and the third qubit as the target qubit. Simultaneously, suppose the encryption key is ${e_k} = \left\{ {\left( {0,0} \right),\left( {0,1} \right),\left( {1,0} \right),\left( {1,1} \right)} \right\}$. According to the key update rule, the decryption key ${d_k} = \left\{ {\left( {0,0} \right),\left( {0,1} \right),\left( {1,0} \right),\left( {1,0} \right)} \right\}$ can be computed. Fig.\ref{Fig-3} shows the quantum circuit of the entire quantum homomorphic encryption scheme, in which the plaintext quantum state setting, encryption, homomorphic evaluation, decryption and measurement are separated by long dashed lines. Fig.\ref{Fig-4} shows the results of the quantum circuit diagram of the process.

\begin{figure}[htbp]
\centering		 
\includegraphics[width=0.9\textwidth]{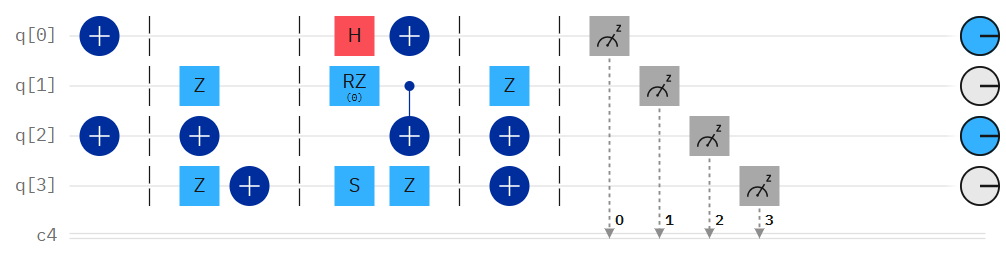}
\caption{\label{Fig-3} The diagram of the quantum circuit}
\end{figure}

\begin{figure}[htbp]
\centering		 
\includegraphics[width=0.9\textwidth]{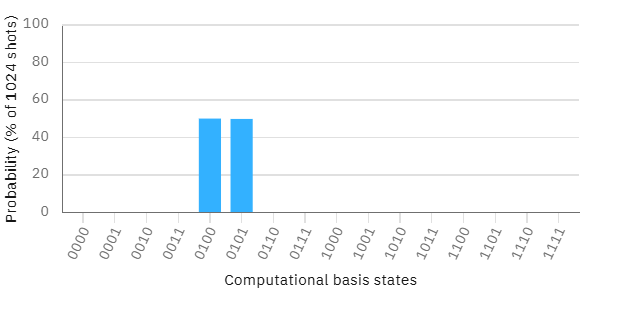}
\caption{\label{Fig-4} The result of the quantum circuit}
\end{figure}

According to theoretical calculations, the final state $\left| {{\omega _{final}}} \right\rangle $ of the four qubits has a 50\% probability of being in state $\left| {0010} \right\rangle $ and a 50\% probability of being in state $\left| {1010} \right\rangle $. As shown in Fig.\ref{Fig-4}, the result of running the quantum circuit also has a 50\% probability of being in state $\left| {0010} \right\rangle $ and a 50\% probability of being in state $\left| {1010} \right\rangle $. In summary, since the two results are identical, this further demonstrates that the proposed scheme is completely correct.

\subsection{Security analysis}\label{subsec4.2}

\begin{theorem}\label{thm2}
The proposed scheme is information-theoretically secure, ensuring that no information about the plaintext data or the key will be leaked.
\end{theorem}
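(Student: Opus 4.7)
The plan is to decompose the security claim into independent guarantees at each phase of the protocol, argue that each phase on its own leaks no information, and then argue that the composition preserves information-theoretic security because the outputs of every phase look uniformly random to any adversary.

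First, I would isolate the adversary model: an external eavesdropper on quantum/classical channels, and an honest-but-curious (or malicious) server who sees the substituted circuit $C_q'$, the encrypted qubits, and any measurement outcomes it performs internally. The trusted key center is assumed honest by hypothesis. I would then consider the phases one by one. For the key distribution phase, I would cite the information-theoretic security of BB84/MDI-QKD: since client and key center use the $\{|0\rangle,|1\rangle,|+\rangle,|-\rangle\}$ encoding with randomly chosen bases and abort whenever the estimated QBER exceeds threshold, the sifted, error-corrected, privacy-amplified string $(a_i,b_i)$ is statistically indistinguishable from uniform to any eavesdropper, including the measuring server, whose only role is to announce Bell-measurement outcomes on an entanglement-swapping link. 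For the secret-splitting phase, I would use the fact that GHZ eigenstate components have uniform amplitudes in the $X$-basis expansion (equations (3)--(10)); consequently each individual share $\rho_{2i}^{MR_{\mathrm{server}_j}^k}$ is a uniformly random bit, and any strict subset of the $M$ shares is independent of the secret $K_i(k)$. This is the classical $(M,M)$-XOR secret-sharing argument lifted to the measurement outcomes.

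Next, for the encryption phase I would invoke the standard QOTP lemma: for any state $\rho$, $\frac{1}{4}\sum_{a,b\in\{0,1\}} X^a Z^b\rho Z^b X^a = I/2$. Because $(a_i,b_i)$ is uniform (from the previous step) and used only once, each ciphertext qubit sent to a server is in the maximally mixed state from the server's point of view, hence gives zero information about the plaintext. I would also observe that, since the second particles of the Bell pairs are themselves fresh outputs of a GHZ projection, applying QOTP to them yields the same maximally mixed state, so the server learns nothing about which eigenstate was chosen either. For the circuit-substitution phase I would explicitly point to the improvement discussed in Section 2.3: the replacement $T\mapsto R_z((-1)^{a}\pi/4)$ is made by the trusted key center, who then transmits $C_q'$ to the server; I would argue (as in \cite{29}) that the server's view of the substituted rotation angle is a single bit $(-1)^{a(\omega)}$ masked by the rest of the key and, crucially, never combined with the ciphertext in a way that exposes $a$, so the marginal distribution over $C_q'$ conditioned on any plaintext is independent of the key. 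For the evaluation and decryption phases, the server's output is still a QOTP encryption (by the correctness proof, Equation (28), which rewrites $C_q X^a Z^b|\omega\rangle$ as $X^{a'}Z^{b'}C_q|\omega\rangle$), so the returned ciphertext is again maximally mixed; the client's decryption is local and the updated key $(a',b')$ is transmitted over the same information-theoretically secure QKD channel.

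Finally I would treat the dynamic add/remove scenarios (Sections 3.2 and 3.3) by noting that adding or removing a server only alters the number of XOR shares; the same $(M{+}1,M{+}1)$- or $(M{-}1,M{-}1)$-secret-sharing argument applies, and fresh QKD keys are generated for any new link. I would close by invoking composition: since each phase's transcript is a function of uniformly random independent strings (the QKD keys, the GHZ measurement outcomes, the QOTP masks), the joint transcript is itself independent of the plaintext, giving information-theoretic security in the sense of trace-distance indistinguishability.

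The main obstacle, which I would handle most carefully, is the circuit-substitution step: the naive substitution $T\mapsto U(\pi/8,(-1)^a\beta,0,(-1)^a\delta)$ is known to leak $a$, and I must make precise why the improved rule (18) does not. I would argue that because the global phase $e^{i\pi/8}$ absorbs the $\beta,\delta$ split, the substituted gate depends on $a(\omega)$ only through an overall sign of the $R_z$ angle, which, combined with the fact that the server never obtains any state of the form $R_z(\pm\pi/4)|\psi\rangle$ without an independent QOTP mask $X^{a(\omega)}Z^{b(\omega)}$ on top of it, renders $a(\omega)$ statistically hidden. Making this last conditional-independence argument watertight, rather than hand-waving through it, is where I expect most of the real work in the proof to lie.
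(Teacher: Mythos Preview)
Your proposal follows essentially the same skeleton as the paper's proof: a phase-by-phase decomposition, with the quantum one-time pad identity $\tfrac{1}{2^{2n}}\sum_{a,b} X^aZ^b\rho(X^aZ^b)^\dagger = I_{2^n}/2^n$ as the central lemma, MDI-QKD security invoked for key distribution, and the non-leakage of the $T$-gate substitution deferred to reference~[29]. You are considerably more systematic than the paper---you state an adversary model, give an explicit $(M,M)$-XOR secret-sharing argument for the splitting phase, treat the dynamic add/remove cases, and close with a composition argument---whereas the paper's own proof is terse and relies heavily on citations.

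The one place you diverge in a way that could cause trouble is your sketch of \emph{why} the improved substitution rule $T\mapsto R_z((-1)^{a(\omega)}\pi/4)$ does not leak $a(\omega)$. You propose to argue that $a(\omega)$ is ``statistically hidden'' because the server never sees $R_z(\pm\pi/4)\lvert\psi\rangle$ without an independent QOTP mask on top. But the leak, if any, is not through the quantum state: the server receives the \emph{classical description} of the substituted circuit $C_q'$ from the key center, and that description literally specifies whether the rotation angle is $+\pi/4$ or $-\pi/4$. No amount of QOTP masking on the quantum register hides a bit that is transmitted in the clear as part of the circuit. The paper does not attempt to argue this point inline at all---it simply writes ``replacing $T/T^\dagger$-gate does not lead to key leakage, as demonstrated in detail~[29]''---so if you want to match the paper's proof you should do the same, and if you want to go further you will need to reconstruct whatever argument~[29] actually makes rather than rely on the conditional-independence sketch you outlined, which as stated does not address the right channel.
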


\begin{proof}
 In the proposed scheme, both encryption and decryption use the one-time pad technique \cite{31,32}. The quantum one-time pad (QOTP) technique relies solely on physical mechanisms and does not depend on classical computational hardness assumptions. Suppose the plaintext consists of \textit{n} qubits denoted by $\left| \omega  \right\rangle  = \left| {{\omega _1}} \right\rangle  \otimes \left| {{\omega _2}} \right\rangle  \otimes \left| {{\omega _3}} \right\rangle  \otimes  \cdots  \otimes \left| {{\omega _n}} \right\rangle $, and the ciphertext consists of \textit{n} qubits denoted by $\left| {\omega '} \right\rangle  = \left| {{\omega _1}'} \right\rangle  \otimes \left| {{\omega _2}'} \right\rangle  \otimes \left| {{\omega _3}'} \right\rangle  \otimes  \cdots  \otimes \left| {{\omega _n}'} \right\rangle $. The encryption operator of QOTP is ${X^a}{Z^b}$, and the encryption key $a,b \in {\left\{ {0,1} \right\}^{2n}}$ is randomly generated and used only once. The following equation holds:

\begin{equation}\label{eq29}
\begin{array}{l}
\left| {\omega '} \right\rangle \left\langle {\omega '} \right|\\
 = \left| {{\omega _1}'} \right\rangle \left\langle {{\omega _1}'} \right| \otimes \left| {{\omega _2}'} \right\rangle \left\langle {{\omega _2}'} \right| \otimes \left| {{\omega _3}'} \right\rangle \left\langle {{\omega _3}'} \right| \otimes  \cdots  \otimes \left| {{\omega _n}'} \right\rangle \left\langle {{\omega _n}'} \right|\\
 = {X^{{a_1}}}{Z^{{b_1}}}\left| {{\omega _1}} \right\rangle \left\langle {{\omega _1}} \right|{\left( {{X^{{a_1}}}{Z^{{b_1}}}} \right)^\dag } \otimes {X^{{a_2}}}{Z^{{b_2}}}\left| {{\omega _2}} \right\rangle \left\langle {{\omega _2}} \right|{\left( {{X^{{a_2}}}{Z^{{b_2}}}} \right)^\dag }\\
{\rm{   }} \otimes {X^{{a_3}}}{Z^{{b_3}}}\left| {{\omega _3}} \right\rangle \left\langle {{\omega _3}} \right|{\left( {{X^{{a_3}}}{Z^{{b_3}}}} \right)^\dag } \otimes  \cdots  \otimes {X^{{a_n}}}{Z^{{b_n}}}\left| {{\omega _n}} \right\rangle \left\langle {{\omega _n}} \right|{\left( {{X^{{a_n}}}{Z^{{b_n}}}} \right)^\dag }\\
 = \frac{1}{{{2^{2n}}}}\sum\limits_{a,b \in {{\left\{ {0,1} \right\}}^{2n}}} {{X^a}{Z^b}\left| \omega  \right\rangle \left\langle \omega  \right|{{\left( {{X^a}{Z^b}} \right)}^\dag }} \\
 = \frac{{{I_{{2^n}}}}}{{{2^n}}}
\end{array}
\end{equation}

From Equation (\ref{eq29}), it is evident that when the client encrypts the plaintext state $\left| \omega  \right\rangle $ using QOTP, the resulting ciphertext quantum state $\left| {\omega '} \right\rangle $ is completely mixed. Therefore, the server cannot obtain any information about the plaintext state $\left| \omega  \right\rangle $ or the key ${e_k} = \left( {a,b} \right)$.

Specifically, during the secret splitting phase, the plaintext data remains solely in the possession of the client without any transmission, thus it is not susceptible to any attacks. During the key generation phase, the process of distributing the key can assess the presence of eavesdroppers through the error rate. Moreover, the MDI-QKD protocol has been proven effective in resisting eavesdropping attacks \cite{33}. In the process of quantum circuit substitution, replacing $T/{T^\dag }$-gate does not lead to key leakage, as demonstrated in detail \cite{29}. During the encryption phase, the client uses the QOTP technology to encrypt the plaintext, resulting in ciphertext that is in a completely mixed state, from which no useful information can be obtained. In the evaluation phase, the server receives ciphertext that remains in a completely mixed state. Because the server or any potential eavesdropper lacks any information about the key, they cannot decrypt it correctly. In the decryption phase, the completely trusted key center will send the decryption key to the clients through an authenticated classical channel. Throughout the entire process of the proposed scheme, protection is provided by QOTP.
\end{proof}

Next, we will make a comparison between the similar quantum homomorphic encryption schemes and our proposed scheme. Relevant parameters for comparison include information-theoretic security (ITS), full homomorphism (FH), compactness, non-interactivity (NI), and dynamics. The specific results of the comparison are shown in Table \ref{Table-2}.

\begin{table}[htbp]
\begin{center}
 \caption{Comparison of similar quantum homomorphic encryption schemes}
  \label{Table-2}
  \begin{tabular}{@{}cccccccc@{}}
    \hline
    QHE scheme &  ITS &  FH &  Compactness &  NI &  Dynamics \\ \hline
Broadbent et al. \cite{14}  & No & Yes & Yes & Yes & No\\
Liang et al. \cite{22}  & Yes & Yes & No & Yes & No\\
Chang et al. \cite{25}  & Yes & Yes & No & No & Yes\\
Wang et al. \cite{29}  & Yes & Yes & No & Yes & No\\
Our scheme  & Yes & Yes & No & Yes & Yes\\
    \hline
  \end{tabular}
\end{center}
\end{table}

\subsection{Efficiency analysis}\label{subsec4.3}

In this section, we will analyze the quantum bit efficiency of the proposed scheme and compare it with other similar schemes in Table \ref{Table-3}.

The quantum bit efficiency in the quantum secret sharing scheme is defined as $\eta  = {c \mathord{\left/
 {\vphantom {c q}} \right.
 \kern-\nulldelimiterspace} q}$, where \textit{c} represents the number of bits of the master key to be shared, and \textit{q} represents the number of quantum bits generated during the sharing process.

In the case of multiple clients and multiple servers, the operations of each client are nearly identical, effectively involving parallel operations across multiple clients. For simplicity, we will now analyze the scenario where one client interacts with multiple servers, denoted as the scheme with (M+1)-party. In the scheme, when the client performs secret splitting, they must prepare a Bell state for each server. This totals \textit{M} Bell states, involving 2\textit{M} quantum bits in total. These Bell states collectively encode one quantum bit of data. Therefore, the encoding efficiency of the proposed scheme is $\eta  = \frac{1}{{2M}}$.

\begin{table}[htbp]
\begin{center}
 \caption{Comparison of the efficiency of similar schemes}
  \label{Table-3}
  \begin{tabular}{@{}cccccccc@{}}
    \hline
   Scheme &  Required quantum resources &  Qubit efficiency &  Required Quantum capabilities \\ \hline
Hsu et al. \cite{34}  & Bell state & 1/8 & Bell measurement\\
Yi et al. \cite{35}  & Two-particle entangled state & 2/11 & Z-basis measurement\\
Tsai et al. \cite{36}  & W-state & 1/8 & Z-basis measurement\\
Yang et al. \cite{37}  & Bell state & 1/8 & GHZ measurement\\
Our scheme  & Bell state & 1/4 & GHZ and Z-basis measurement\\
    \hline
  \end{tabular}
\end{center}
\end{table}

\section{Conclusion}\label{sec5}

In this paper, a novel multi-party dynamic quantum homomorphic encryption scheme based on rotation operators is proposed. The proposed scheme not only handles the volatility problem of the servers dynamically, but also eliminates errors caused by homomorphic evaluation of $T$-gate non-interactively, thus we extend this scheme to support a multi-client multi-server mode. In addition, the trusted key center is introduced to be responsible for secure key generation, key updating and circuit substitution for the homomorphic evaluation stage, specifically referring to $T/{T^\dag }$-gate circuit substitution. This reduces client-side operations and lowers the quantum capabilities required from clients. Throughout the entire process, the QOTP ensures that sensitive data remains secure. Simultaneously, we aim for the constructed scheme to be more applicable in quantum distributed networks, accelerating the development of efficient and secure large-scale quantum information transmission. In future work, efforts can focus on leveraging additional quantum properties to optimize processes and enhance the compactness of quantum homomorphic encryption schemes.

\section{Acknowledgement}\label{sec6}

This research was supported in part by the National Natural Science Foundation of China (12161061, 62201525, 61762068), in part by the Beijing Municipal Education Commission Scientific Research Project(KM202310015002, KM202110015003,), in part by the Special Fund for the Local Science and Technology Development of the Central Government under Grant (2020ZY0014), in part by Natural Science Foundation of Inner Mongolia Grant (2021MS01022, 2021MS06011, 2023LHMS06018) and in part by the Youth Excellent Project of Beijing Institute of Graphic Communication(Ea202411).

\bibliographystyle{unsrtnat}
\bibliography{quantumarticle}

\end{document}